\definecolor{taskyblue}{rgb}{0.44706, 0.56078, 0.81176}         
\definecolor{ta2skyblue}{rgb}{0.20392, 0.39608, 0.64314}        
\definecolor{ta3skyblue}{rgb}{0.12549, 0.29020, 0.52941}        
\definecolor{ta4skyblue}{rgb}{0.54706, 0.66078, 0.1176}         
\newcommand{\prog}[1]{\textnormal{\scshape#1}}
\tikzset{
  ashadow/.style={opacity=.25, shadow xshift=0.07, shadow yshift=-0.07},
}
\tikzstyle{lightgray}=[fill=black!5]
\tikzstyle{gray}=[fill=black!20]
\tikzstyle{darkgray} = [fill=black!35]
\tikzstyle{lightblue} = [fill=blue!5]
\tikzstyle{lightred} = [fill=red!5]
\tikzstyle{ord}=[rounded rectangle, draw,minimum size=15pt,inner sep=1pt] 
\tikzstyle{goal}=[rounded rectangle, very thick,draw,minimum size=15pt,inner sep=1pt, fill=white] 
\tikzstyle{trans}=[>=angle 90,->,line width=0.02cm]
\tikzstyle{player}=[rectangle,fill=black,minimum size=3pt,inner sep=0pt]
\tikzstyle{distribution}=[circle,fill=black,minimum size=3pt,inner sep=0pt]
\tikzstyle{transparent} = [fill opacity=0.05]
\newcommand{\init}{\ensuremath{\mathit{init}}}
\newcommand{\program}{\ensuremath{\mathcal P}}
\newcommand{\transby}[1]{\overset{#1} E}
\newcommand{\transbystar}[1]{\overset{#1}{E^*}}
\newcommand{\pathsequence}
{\ensuremath{
	(\locvec_0,a_{0,\bar{l}(0)},\locvec_1) \ldots 
	(\locvec_{N-1},a_{m,\bar{l}(m)},\locvec_N)
	}}
\newcommand{\independent}[2]{#1\ ||\ #2}
\newcommand{\dependent}[2]{#1\ \cancel{||}\ #2}
\newcommand{\hb}[1]{\ensuremath{\rightarrow_{#1}}}
\newcommand{\thread}{\ensuremath{T}}
\newcommand{\threads}{\ensuremath{\mathcal T}}
\newcommand{\locs}{\ensuremath{L}}
\newcommand{\locvec}{\ensuremath{\bar{l}}}
\newcommand{\locvecs}{\ensuremath{L_G}}
\newcommand{\actions}{\ensuremath{A}}
\newcommand{\wff}{\ensuremath{\mathcal F}}
\newcommand{\art}{\ensuremath{\mathcal A}}
\newcommand{\covers}{\ensuremath{\sqsubseteq}}
\newcommand{\annotation}{\ensuremath{\phi}}
\newcommand{\formula}{\ensuremath{\mathcal F}}
\newcommand{\races}{\ensuremath{\lessdot_{\pi}}}
\begin{document}

\title{ {\em AbPress}: Flexing Partial-Order Reduction\\  and Abstraction}

\author{Daniel Kroening \and Subodh Sharma \and Bj\"orn Wachter}
\institute{Department of Computer Science, University of Oxford, UK}

\maketitle

\begin{abstract}
%


Partial-order reduction (POR) and lazy abstraction with interpolants are two
complementary techniques that have been successfully employed to
make model checking tools for concurrent programs effective.

In this work, we present {\em AbPress}
-- {\bf A}bstraction-{\bf b}ased {\bf P}artial-order {\bf R}eduction with
{\bf S}ource-{\bf S}ets -- 
an algorithm that fuses 
a recently proposed and powerful dynamic POR technique
 based on {\em source-sets} and lazy abstraction to obtain
 an efficient
software model checker for multi-threaded programs.
It trims the interleaving space 
by taking the abstraction and {\em source-sets} into account.
We amplify the effectiveness of {\em AbPress} with a novel 
solution that summarizes the accesses to shared variables over
a collection of interleavings.

We have implemented {\em AbPress} in a tool that analyzes concurrent
programs using lazy abstraction, {\em viz.,}\, Impara.
Our evaluation on the effectiveness of the presented approach has
been encouraging. {\em AbPress} compares favorably to existing 
state-of-the-art tools in the landscape.

\end{abstract}

\section{Introduction}
The generation of safety proofs for concurrent programs remains a major challenge.
While there exist software verification tools based on abstraction that scale
to sequential systems code~\cite{DBLP:journals/cacm/BallLR11}, that
cannot be said for multi-threaded software.
Abstraction-based verification of sequential programs works by annotating program locations in the
control-flow graph with safety invariants. 
However, applying  similar techniques to concurrent software
is ineffective, as interleavings lead to an explosion of the control-flow graph.  
Therefore, along with abstraction of data,
techniques are needed to effectively deal with
interleaving explosion.  
Partial-order reduction (POR) \cite{Pel93,GW91,valmari91}, a 
path-based exploration approach, 
is a technique that addresses the explosion in the interleaving space. 
The key notion in POR techniques is the {\em independence} of actions.
Independent actions can commute, resulting in interleavings that cause no observable change in the 
output. All interleavings obtained by commuting independent actions fall
into the same equivalence class. Thus, exploring only representative interleavings results
in a reduction in the number of interleavings explored in total.
Consider the example in Figure~\ref{fig:improving-impara}. The first two steps of 
$T_1$ are independent with the first two steps of $T_2$ (since they write to different 
shared variables). Out of the two interleavings
shown in Figure~\ref{fig:covers-por}, POR will identify that
the interleaving $\langle 00,10,11,21 \rangle$ does not need to be explored.

There are also path-based techniques to address the problem of
data-state explosion. A prominent technique is lazy abstraction
with interpolants (the Impact algorithm)~\cite{DBLP:conf/cav/McMillan06}. 
The Impact approach begins by unwinding a program's control-flow graph
into a tree.  Each node in the tree (encoding the control location) is
initially labeled with the state predicate \texttt{TRUE}, which indicates
reachability of the node from the initial location.
On reaching an error location, the node labels along the path to
that node are updated with interpolants in order to prove that the error 
state is unreachable. The starting node is labeled \texttt{TRUE} and 
each subsequent node is assigned a formula that implies the
next node's formula by executing the intervening program instruction. 
If the error node is labeled with \texttt{FALSE} then the approach has
proved the path to be infeasible.
The path exploration can terminate early. This happens when Impact discovers
{\em covered} nodes. When two nodes $v_1$ and $v_2$ in the abstract reachability tree 
have the same program control location and the invariant at $v_1$ subsumes the invariant at
$v_2$, then we say that $v_1$ {\em covers} $v_2$. This implies that it is no longer 
necessary to explore the reachability tree that follows $v_2$.
For instance, in Figure~\ref{fig:covers-por} location $21$ in the
interleaving $\langle 00,10,11,21 \rangle$ has the same interpolant as the
interpolant in location $21$ of interleaving $\langle 00,10,20,21,22,32
\rangle$.  Thus, the right $21$ node is covered by the left $21$ node. 
Observe that any implementation of POR would have eagerly detected the
independence between $t_1: x = x+1$ and $t_2: y =-1$ and the exploration of
the right interleaving would have been avoided.  The notion of covers
is thus most useful when control-flow branching is present in the program.
\begin{figure}[t]
\begin{minipage}[b]{0.5\textwidth}
    \centering{
    \scriptsize
    \begin{tabular}{l}
    \begin{tabular}{|l|l|l|}\hline
    $Main$              & $\thread_1$             & $\thread_2$\\ \hline
    \texttt{x=0; y=0;}         &  \texttt{1: x =1;}  & \texttt{1: y=-1;}\\
     \texttt{create}($\thread_1$);         &  \texttt{2: x=x+1;}   & \texttt{2: y= y+1;}  \\ 
    \texttt{create}($\thread_2$);&   \texttt{3:  x=y;}          &      \\
    \texttt{join}($\thread_1$);             &                         &     \\ 
    \texttt{join}($\thread_2$);             &                         &     \\  
    \texttt{assert ($ y \geq 0$)};             &                         &     \\ \hline 
    \end{tabular}
    \end{tabular}}
    \caption{Example with races}
    \label{fig:improving-impara}
\end{minipage}
\hspace{2em}
\begin{minipage}[b]{0.3\textwidth}
\centering{\includegraphics{fig/covers-por.fig}}
\caption{Covers}
\label{fig:covers-por}
\end{minipage}

\vspace{-.5cm}
\end{figure}

Both Impact and POR,
in particular dynamic POR (DPOR)~\cite{Flanagan:2005,Abdulla:2014},
use backtracking mechanisms to explore alternative choices at 
control locations: Impact uses backtracking for branching control
flow and DPOR for interleavings.
Due to the operational similarity and respective effectiveness in addressing
problems arising from data and schedule explosion, Impact and DPOR are ideal
candidates to be fused.  Impara~\cite{wko2013} offers a framework where
Impact can be combined with a POR technique of choice.  Impara comes with an
implementation of the \emph{Peephole POR} (PPOR)
algorithm~\cite{Wang:2008:PPO:1792734.1792772}, which leaves room
for further improvements.
In particular, PPOR is known to be suboptimal for programs with more than
two threads.  Further, PPOR does not integrate a backtracking mechanism;
it is a symbolic approach where chains of {\em dependent} actions have to be
maintained at each node by ascertaining information from the future
execution of the program.  DPOR algorithms can potentially be more efficient
than PPOR.  The recent work in~\cite{Abdulla:2014} offers us an opportunity
to use a dynamically constructed set of {\em dependent} actions, namely {\em
source-sets}.
Opportunities also exist to fine-tune the fusion of Impact and DPOR
where the abstraction constructed by Impact feeds information into DPOR.
%

%

%
%

%
%
%
In this paper, we present a new verification algorithm for multi-threaded
programs where Impact and DPOR with source-sets are combined in a novel way.
Note that combining {\em covers} and DPOR in a sound manner is a non-trivial exercise. 
Consider Figure \ref{fig:cover-bt}. Let $n_C$ be the covering node and 
$n_c$ be the covered node.  
Let $n$ be the least common ancestor of nodes $n_c$ and $n_C$.

\begin{wrapfigure}{r}{0.45\textwidth}
\vspace{-2em}
\centering
\includegraphics[scale=0.75]{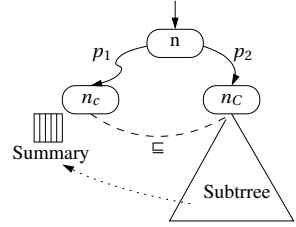}
\caption{Shared access summarization}
\label{fig:cover-bt}
\vspace{-2em}
\end{wrapfigure}
To discover alternate schedule choices in $p_1$, DPOR will first
enumerate the paths in the subtree from $n_C$.  For each path $p_s$ in the
subtree, DPOR will perform a dependence analysis for each step in $p_1$ with
each step in $p_s$ (backtracking mechanism).  Such an approach turns out
to be prohibitively expensive. Therefore we \emph{summarize} the accesses
in the subtree and re-use the summary.
This summarization technique is one key element to obtain an effective
combination of DPOR and the covers Impact uses.

\textbf{Contributions}:
Our main contributions are: 
(1)~an algorithm, {\em AbPress}, that combines source-set based DPOR with Impact, 
(2)~abstract summaries of shared variable accesses in a subtree to create a sound fusion of DPOR and 
covers, and
(3)~a comparison of {\em AbPress} with the state-of-the-art tools in the landscape. 
We present the basic definitions associated with Impact in Section~\ref{sec:prelims}. 
We present the essentials of source-set DPOR in Section~\ref{sec:sdpor} and 
abstract summaries in Section~\ref{sec:summaries}. The complete algorithm 
{\em AbPress} is presented in Section~\ref{sec:algo}. Experimental results 
are discussed in Section~\ref{sec:results}.

\section{Preliminaries}
\label{sec:prelims}

We consider a concurrent program $\program$ that is composed of a finite set
of threads $\threads$.  Each thread executes a sequence of operations given
in C or C++.  The threads communicate with each other by performing
operations on shared communication objects such as global variables,
semaphores and locks.
%
%
%
We only consider programs with a fixed number of threads.
A thread $T \in \threads$ is a four-tuple $T = (\locs, l_{0}, \actions,
l_{\varepsilon})$ consisting of a finite set of program control locations
$\locs$, an initial location $l_{0} \in \locs$, a set of thread actions
$\actions$ and an error location $l_{\varepsilon} \in \locs$.  A thread
action $a$ is a triple $a = (l, c, l')$ where $l, l' \in \locs$ are the
entry and exit program locations for the action, respectively, and $c$ is
the program instruction.  For brevity, we denote an action of thread $T$
that is enabled at location $l$ by $a_{T,l}$.  We assume that we are working
with an intermediate program representation where an instruction is either
an assignment or an assume statement.

For notational convenience, we identify instructions using their standard
formalisation as first-order formulae over the alphabet of primed and
unprimed program variables $V \cup V'$.  We denote the set of all such
formulae by $\wff(V \cup V')$.
Consider the example in Fig.~\ref{fig:improving-impara}. 
For the assignment $z=1$ in $T_1$, we have the action 
$(l_0,\,(z=2 \wedge z'=z),\,l_1)$.
%

A \emph{global control location} is a tuple with one component per thread,
and is given as function $\locvec: \mathcal{T} \rightarrow L$.  Let
$\locvecs$ be the set of all global control locations.
%
%
By $\bar{l}[T\mapsto l]$, we denote the global location where the location
of thread $T$ maps to $l$ while the locations of all the other threads
remain unchanged.  An action $a \in \actions$ from thread $T$ is enabled if
the action is enabled at $\locvec(T)$.



A program path $\pi$ is a sequence $\pi = \sigma_0, \ldots, \sigma_N$ where
$\sigma_i = (\bar{l}_i,T_i, a_i,\bar{l}_{i+1})$ consists of an action $a_i$
from thread $T_i \in \threads$ and $a_i$'s entry and exit global program
locations, $\bar{l}_i$ and $\bar{l}_{i+1}$.
A path is an \emph{error path} if $\bar{l}_0$ is initial control location
for all threads, and $\bar{l}_{N+1}$ contains an error location of a thread.

We denote by $\formula(\pi)$ the sequence of transition formulas
$\init^{(0)} \wedge R_0^{(0)}, \ldots R_{N}^{(N)}$ obtained by shifting each
$R_i$ $i$ time frames into the future.  Each $R_i$ is a transition formula
for an action at location $\bar{l}_i$.  We say that $\pi$ is feasible if
$\bigwedge R_i^{(i)}$ is logically satisfiable.  A~solution for $\bigwedge
R_i^{(i)}$ corresponds to a program execution and assigns values to the
program variables at each execution step.  The program is said to be
\emph{safe} if all error paths are infeasible.

\subsection{Interpolants, Invariants and ARTs}

In case a path is infeasible, an explanation can be extracted in the form of
an interpolant.   To this end, we recall the definition of \emph{sequent
interpolants}~\cite{DBLP:journals/tcs/McMillan05}.  A sequent interpolant
for formulas $A_1,\ldots,A_N$ is a sequence $\widehat A_1,\ldots, \widehat
A_N$ where the first formula is equivalent to true $\widehat A_1\equiv
\mathit{True}$, the last formula is equivalent to false $\widehat A_N\equiv
\mathit{False}$, consecutive formulas imply each other, i.e., for all
$i\in\{1,\ldots,N\}$, $\widehat A_{i-1}\wedge A_i \Rightarrow \widehat A_i$,
and the $i$-th sequent is a formula over the common symbols of its prefix
and postfix, i.e., for all $i\in\{1,\ldots N\}$, $\widehat A_{i}\in
\wff(A_1,\ldots,A_i) \cap \wff(A_{i+1},\ldots,A_N)$.  For certain
theories, quantifier-free interpolants can be generated for inconsistent,
quantifier-free sequences~\cite{DBLP:journals/tcs/McMillan05}.

An \emph{inductive invariant} is a mapping $I:  \locvecs \to \wff(V)$ such
that $\init \Rightarrow I(\locvec^i)$ (where $\locvec^i$ is the initial
global control location) and for all locations $\locvec \in \locvecs$, all
threads $\thread\in\threads$, and actions $a=(l,R,l')\in \thread$ enabled in
$\locvec$, we have $I(\locvec)\wedge R \Rightarrow I(\locvec[\thread\mapsto
l'])$.  A~\emph{safety invariant} is an inductive invariant with
$I(\locvec)\equiv \mathit{False}$ for all error locations $\locvec$.  If
there is a safety invariant the program is safe.

\begin{definition}[ART]
    An \emph{abstract reachability tree} (ART) $\art$ for program $\program$ is a tuple
    $(N,r,\transby{},\covers)$ consisting of a tree with nodes $N$, root node
    $r \in N$, edges $\transby{} \subseteq N \times \threads \times \mathcal{F}(V \cup V') \times N$, 
    and a covering relation $\covers
    \subseteq N^2$ between tree nodes such that:
    \begin{itemize}
      \item every nodes $n\in N$ is labeled with a tuple $(\bar{l},\annotation)$ consisting of
      a current global control location $\bar{l}$, and a state formula
      $\annotation$.
      We write $\bar{l}(n)$ and $\annotation(n)$ 
      to denote the control location and annotation, respectively, of node $n$.

      \item edges correspond to program actions, and
      tree branching represents both branching in the control flow
      within a thread and thread interleaving.  Formally, an 
      edge is a tuple $(v,\thread,R,w)$ where $v,w \in N$,
      $\thread\in\threads$, and $R$ the transition
      constraint of the corresponding action.
    \end{itemize} 
\end{definition}
We write $v\xrightarrow{\thread,R} w$ if there exists an edge
$(v,\thread,R,w)\in \transby{}$.  We denote $ v \leadsto w$ if there is a
path from $v$ to $w$ in $\art$. 
%
The role of the covering relation is crucial when proving program
correctness for unbounded executions.  It serves as an important criterion
in pruning the ART without missing error paths.
The node labels, intuitively, represent inductive invariants that represent
an over-approximation of a set of states.  {\em Covering} relation, in other
words, is the equivalent of a subset relation over this over-approximation
between nodes.  Suppose that two nodes $v,w$ share the same control
location, and $\annotation(v)$ implies $\annotation(w)$, {\em i.e.}, $v
\sqsubseteq w$.  If there was a feasible error path from~$v$, there would be
a feasible error path from~$w$.  Therefore, if we can find a safety
invariant for $w$, we do not need to explore successors of $v$, as
$\annotation(v)$ is at least as strong as the already sufficient invariant
$\annotation(w)$.
Therefore, if $w$ is safe, all nodes in the subtree rooted in $v$ are safe
as well.  A node is covered if and only if the node itself or any of its
ancestors has a label implied by another node's label at the same control
location.

To obtain a proof from an ART, the ART needs to fulfil certain conditions,
summarized in the following definition:

\begin{center}
  \begin{definition}[Safe ART]
    Let $\art=(V,\epsilon,\transby{},\covers)$ be an ART.
    $\art$ is \emph{well-labeled} if the labeling is
     inductive, i.e., 
          $\forall (v,\thread,R,w) \in \transby {}:\ \locvec(v) = \locvec(w) \wedge \annotation(v) \wedge R 
           \Rightarrow \annotation(w)'$
          and compatible with covering, i.e., 
          $(v,w)\in\covers:\ \annotation(v)\Rightarrow \annotation(w)$ and
          $w$ not covered.
    $\art$ is \emph{complete} if all of its nodes are covered,
    or have an out-going edge for every action that is enabled at $\locvec$.
    \item $\art$ is \emph{safe} if all error nodes are labeled with
    $\mathit{False}$.
   \end{definition}
\end{center}

\begin{theorem}
  If there is a safe, complete, well-labeled ART of program $\program$, then $\program$ is safe.
\end{theorem}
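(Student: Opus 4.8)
The plan is to reduce the statement to the fact, already established just above, that a program possessing a safety invariant is safe, by manufacturing such an invariant out of the ART. Concretely, I would define a candidate $I : \locvecs \to \wff(V)$ by taking, at each global control location $\locvec$, the disjunction of the annotations of all \emph{uncovered} nodes carrying that location:
$$I(\locvec) \;:=\; \bigvee \{\,\annotation(v) \;:\; \bar{l}(v) = \locvec,\ v \text{ uncovered}\,\},$$
with the empty disjunction read as $\mathit{False}$. It then suffices to verify that $I$ is an inductive invariant that is $\mathit{False}$ at every error location, since the program is safe whenever such a safety invariant exists.

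Initiation and safety are the easy endpoints. For initiation, the root $r$ carries $\locvec^i$ and is labeled $\mathit{True}$, so $\init \Rightarrow \annotation(r) \Rightarrow I(\locvec^i)$ (and if $r$ were itself covered I would route through the uncovered node that covers it, which again contributes a disjunct). For safety, let $\locvec$ be an error location; since $\art$ is safe, every error node is labeled $\mathit{False}$, so each disjunct of $I(\locvec)$ is $\mathit{False}$ and hence $I(\locvec) \equiv \mathit{False}$.

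The substance is inductiveness. Fix an uncovered node $v$ with $\bar{l}(v)=\locvec$ and an action $a=(l,R,l')$ of some thread $\thread$ enabled in $\locvec$. Because $v$ is uncovered, completeness of $\art$ forces $v$ to carry an outgoing edge for $a$, say $v \xrightarrow{\thread,R} w$ with $\bar{l}(w) = \locvec[\thread\mapsto l']$; well-labeledness then gives $\annotation(v)\wedge R \Rightarrow \annotation(w)'$. If $w$ is itself uncovered, $\annotation(w)$ is a disjunct of $I(\bar{l}(w))$ and we are done for this $v$. If $w$ is covered, note that neither $v$ nor any ancestor of $v$ is directly covered (as $v$ is uncovered), so $w$ can only be covered \emph{directly}, i.e. $(w,u)\in\covers$ for some node $u$; the covering-compatibility clause then yields $\annotation(w)\Rightarrow\annotation(u)$ with $\bar{l}(u)=\bar{l}(w)$ and $u$ itself uncovered, so $\annotation(u)$ is again a disjunct of $I(\bar{l}(w))$. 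Either way $\annotation(v)\wedge R \Rightarrow I(\bar{l}(w))'$, and disjoining over all uncovered $v$ at $\locvec$ gives $I(\locvec)\wedge R \Rightarrow I(\locvec[\thread\mapsto l'])'$.

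I expect the main obstacle to be exactly this covering step: one must argue that a covered successor can always be replaced by an uncovered ``dominating'' node whose annotation is at least as strong. This hinges on two features of the definitions working in tandem, namely that there are no covering chains (a covering node is required to be uncovered) and that completeness supplies outgoing edges precisely at the uncovered nodes that make up the disjunction. Once the routing through covers is handled correctly, the three invariant conditions fall out and the earlier safety theorem closes the argument.
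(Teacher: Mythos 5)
The paper states this theorem without proof, so there is nothing to compare against directly; your argument is the standard soundness proof for Impact-style unwindings (McMillan, CAV~2006), and it is correct. The key step --- replacing a covered successor $w$ by a direct coverer $u$ with $\annotation(w)\Rightarrow\annotation(u)$, $u$ uncovered and at the same location --- is handled properly: since $v$ is uncovered, none of $w$'s proper ancestors is in the domain of $\covers$, so $w$ must be covered directly, and the compatibility clause of well-labeledness plus the absence of covering chains does the rest. The one point worth flagging is that your initiation step uses $\init\Rightarrow\annotation(r)$ (you take $\annotation(r)\equiv\mathit{True}$); the paper's stated definition of well-labeledness omits any condition on the root, and without it the theorem as literally stated would fail (label every node $\mathit{False}$), so you are correctly supplying a clause that is present in McMillan's original definition and is maintained by the algorithm (interpolant sequences begin with $\mathit{True}$), but is missing from the paper's Definition of a safe ART.
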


\subsection{Path correspondence in ART}


%
Let the set of program paths be $\Pi_{CFG}$.  A program path $\pi\in\Pi_{CFG}$ is
covered by $\art$ if there exists a corresponding sequence of nodes in the
$\Pi$ (denoting the set of paths in $\art$), where corresponding means that the nodes visits the same control
locations and takes the same actions.  In absence of covers, the matching
between control paths and sequences of nodes is straightforward.

However, a path of the ART may end in a covered node.  For example, consider
the path $\langle 00,10,11,21,22 \rangle$ in the control-flow graph of 
Figure~\ref{fig:covers-por}.  While prefix   $\langle 00,10,11,21$ can be matched by node sequence $\langle v_{00} v_{10} u_{11} u_{21} \rangle$, node $u_{21}$ is
covered by node $v_{21}$, formally $u_{21} \covers v_{21}$.  
We are stuck at node $u_{21}$, a leaf with no out-going
edges. In order to match the remainder of the path, our solution is to allow the corresponding sequence to ``climb up''
the covering order $\covers$ to a more abstract node, here we climb from
$u_{21}$ to $v_{21}$.  Node $v_{21}$ in turn must have a corresponding out-going
edge, as it cannot be covered and its control location is also $\locvec_2$. 
Finally, the corresponding node sequence for  $\langle 00,10,11,21,22 \rangle$ is 
$\langle v_{00} v_{10} u_{11} v_{21} v_{22} \rangle$.


This notion is formalized in the following definition:
\begin{center}
\begin{definition}[Corresponding paths \& path cover]
  \label{def:pathcover}
  Consider a program $\program$.  Let $\art$ be an ART for $\program$ and let
  $\pi=\pathsequence$ be a program path.  A \emph{corresponding path} for
  $\pi$ in $\art$ is a sequence $v_0,\ldots,v_n$ in $\art$ such that, for all
  $i\in\{0,\ldots,N-1\}$, $\locvec(v_i) = \locvec_i$, and
  \begin{align*}
    \exists u_{i+1} \in N:\ v_i, \xrightarrow{\thread_i, R_i} u_{i+1}, 
    a_i = (\locvec_i, R_i, \locvec_{i+1}) 
    \wedge (u_{i+1}=v_{i+1} \vee u_{i+1} \covers v_{i+1})
  \end{align*}
  A program path $\pi$ is covered by $\art$ if there exists a corresponding path $v_0,\ldots,v_n$ in $\art$.
\end{definition}
\end{center}

\begin{center}
\begin{proposition}
  Let $\program$ be a program.
  Let $\Pi$ be a representative set of program paths.
  Assume that $\art$ is safe, well-labeled and
  covers every path $\pi \in \Pi$. Then program $\program$ is safe.
  \label{proposition:picompleteness}
\end{proposition}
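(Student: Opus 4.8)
The plan is to prove the contrapositive content directly: I show that no error path of $\program$ can be feasible, which by definition is exactly program safety. I would split the argument into two independent pieces. The first is a \emph{reduction} step that uses the hypothesis that $\Pi$ is a representative set to replace an arbitrary error path by an equivalent one lying in $\Pi$. The second is an \emph{ART-soundness} step that shows every error path already contained in $\Pi$ is infeasible, using that $\art$ is safe, well-labeled, and covers it. Chaining the two yields infeasibility of every error path, and hence safety of $\program$. Note that this parallels the earlier theorem on safe, complete, well-labeled ARTs, except that the completeness hypothesis is here replaced by the weaker ``covers every $\pi\in\Pi$'' condition, which is precisely what the corresponding-path machinery of Definition~\ref{def:pathcover} is designed to exploit.

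For the ART-soundness step, fix an error path $\pi=\sigma_0,\ldots,\sigma_N$ in $\Pi$ with $\sigma_i=(\locvec_i,\thread_i,a_i,\locvec_{i+1})$ and $a_i=(\locvec_i,R_i,\locvec_{i+1})$. Since $\art$ covers $\pi$, Definition~\ref{def:pathcover} supplies a corresponding path $v_0,\ldots,v_N$. Suppose toward a contradiction that $\pi$ is feasible; then $\init^{(0)}\wedge\bigwedge_i R_i^{(i)}$ has a solution, i.e.\ there are states $s_0,\ldots,s_N$ with $s_0\models\init$ and $(s_i,s_{i+1})\models R_i$ for each $i$. I would then establish by induction on $i$ the invariant $s_i\models\annotation(v_i)$. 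The base case holds because the root is labeled $\mathit{True}$, so $\init\Rightarrow\annotation(v_0)$ trivially. For the step, the corresponding-path condition yields $u_{i+1}$ with $v_i\xrightarrow{\thread_i,R_i}u_{i+1}$ and $u_{i+1}=v_{i+1}$ or $u_{i+1}\covers v_{i+1}$; inductiveness of the well-labeling gives $\annotation(v_i)\wedge R_i\Rightarrow\annotation(u_{i+1})'$, so from $s_i\models\annotation(v_i)$ and $(s_i,s_{i+1})\models R_i$ we obtain $s_{i+1}\models\annotation(u_{i+1})$, and the covering-compatibility clause $\annotation(u_{i+1})\Rightarrow\annotation(v_{i+1})$ closes the ``climb'' so that $s_{i+1}\models\annotation(v_{i+1})$. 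At $i=N$ the location $\locvec(v_N)$ is an error location, so $v_N$ is an error node and safety of $\art$ forces $\annotation(v_N)\equiv\mathit{False}$; but then $s_N\models\annotation(v_N)$ is unsatisfiable, a contradiction. Hence no error path in $\Pi$ is feasible.

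For the reduction step, I would invoke the defining property of a representative set from Section~\ref{sec:sdpor}: for every error path $\pi$ of $\program$ there is a Mazurkiewicz-equivalent path $\pi'\in\Pi$ obtained by repeatedly commuting adjacent independent actions. Two such equivalent paths read off the same transition constraints up to reordering of independent (hence commuting) actions, so one is feasible iff the other is, and since commuting independent actions does not alter the reached global control location, $\pi'$ is again an error path. Therefore feasibility of an arbitrary error path $\pi$ would entail feasibility of some error path $\pi'\in\Pi$, which the ART-soundness step has ruled out; so every error path is infeasible and $\program$ is safe.

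The step I expect to be the main obstacle is the reduction, not the induction: the induction is the routine McMillan-style propagation of node labels and is essentially mechanical once the corresponding path is in hand. The delicate point is justifying that passing to a representative path in $\Pi$ preserves both feasibility and the error-reaching property, which rests on the precise notion of independence and on the fact that commuting independent actions yields a logically equivalent conjunction of transition formulas together with the same reached global location; making this rigorous requires the source-set/independence machinery of Section~\ref{sec:sdpor}. A secondary subtlety worth flagging is that the corresponding path may climb the covering order at several indices, so the induction must treat the covered case uniformly at every step rather than only at the final leaf.
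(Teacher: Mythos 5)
The paper does not actually include a proof of Proposition~\ref{proposition:picompleteness} (it is stated without proof), so there is nothing to compare against line by line; your argument is the standard one that the authors evidently intend, and it is correct. Your two-step decomposition --- (i) use representativeness of $\Pi$ to replace an arbitrary error path by a Mazurkiewicz-equivalent error path in $\Pi$ with the same feasibility status and reached global location, and (ii) propagate a satisfying assignment along the corresponding path of Definition~\ref{def:pathcover}, using inductiveness of the labeling for each edge and the implication $\annotation(u_{i+1})\Rightarrow\annotation(v_{i+1})$ to climb the cover at every step, until the $\mathit{False}$ label at the error node yields a contradiction --- is exactly the intended combination of the POR soundness argument with the Impact-style ART soundness argument, and you correctly flag that the reduction step is where the real content (the independence/commutation machinery of Section~\ref{sec:sdpor}) lives.
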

\end{center}



  


We denote the set of enabled actions from a node $n \in N$   by 
$\mathit{enabled}(n)$. The edge from node $n$ is denoted by $E(n)$.
  For any action $a$, let $\mathit{proc}(a) = T$ return 
the thread executing the action. We identify 
the unique successor node obtained after firing  $a$ from $n$ by $a(n)$.
In any given node  $n \in N$, let $\mathit{next}(n, T) = a_{T, \locvec(T)}$
denote the unique next action to be executed from thread $T$ 
after $n$.
For a path $ \pi \in \Pi$, the action fired from node $n \in \pi$ is $a_n$.

\section {Partial Order Reduction with Source-sets}
\label{sec:sdpor}
The basis for reduction using POR is the \emph{independence relation} among
concurrent actions.  Intuitively, two concurrent actions are independent if
executing then in any order leads to the same final state.  Thus, a path
$\pi'$ obtained by commuting adjacent independent actions in $\pi$ is same
in behavior as $\pi$.
The equivalence class representing all behaviorally similar interleavings is
commonly known as a {\em Mazurkiewicz
trace}~\cite{DBLP:conf/ac/Mazurkiewicz86}.
In other words, Mazurkiewicz traces represent the partial order among
the events of an execution path.
It suffices to explore only representative execution (or one linearization) of each Mazurkiewicz
trace.  In context of this work, it means that exploring 
representative paths in $\art$ will suffice.
\begin{center}
\begin{definition}[Independent actions]
Let $S$ represent the set of all execution states of the program.
Two actions $a_1$ and $a_2$ are {\em independent}, denoted
by $ \independent {a_1}{ a_2}$, iff 
the following conditions hold for all $s \in S$:
\begin{itemize}
\item Enabled: if $a_1$ is enabled in $s$ then 
   $a_2$ is enabled in $a_1(s)$ iff 
  $a_2$ is enabled in $s$ and
\item Commute: $a_1 (a_2(s) ) = a_2 (a_1 (s))$
\end{itemize}
\end{definition}
\end{center}
The definition of independence is impractical to implement (as it
requires a universally quantified check over the state-space).
In practice, easily-checkable conditions can be provided to determine
dependence of two actions (denoted by $\dependent{}{}$): for instance, two
actions that are concurrent at location $\locvec$ that acquire the same lock
or access the same shared variable (with one action performing a write) are
dependent.  In our setting, we consider actions that are enabled at a global
location $\locvec$ to be independent when they {\em commute}.
POR algorithms operate by first 
computing a subset of {\em relevant} enabled actions from a node and explore
only the computed subset from a scheduled node.
Some of the popular techniques to compute this subset 
are {\em persistent-set}  
and {\em sleep-set} techniques~\cite{DBLP:books/sp/Godefroid96}.
Briefly, a set $P$ of threads is persistent in a node if in any execution
from the node, the first step that is dependent with the first step of some
thread in $P$ must be taken by {\em some} thread in $P$.  Sleep-sets, on the
other hand, maintain, at each state, information about past explorations
and dependencies among transitions in the state in order to prune redundant
explorations from that state.  An elaborate exposition on these topics is
beyond the scope of this paper.  For a detailed discussion on these
techniques, refer~\cite{DBLP:books/sp/Godefroid96}.

Dynamic POR (DPOR) techniques~\cite{Flanagan:2005,yuyang-2008,Abdulla:2014}
compute the dependencies on the fly.  This leads to the construction of
more precise persistent-sets, thereby resulting in potentially smaller
state-graphs for exploration.
The central concept in most DPOR algorithms is that of a {\em race}.  DPOR
algorithms check whether actions in a path are racing and if found racing then the algorithm
tries to execute the program with a different schedule to revert the race.
We use $<_{\pi}$ to denote the total order among the nodes in the 
path $\pi \in \art$. Let $\hb{\pi}$ be the unique {\em happens-before}
relation over the nodes in the path $\pi \in \art$ such that $\hb{\pi} \subseteq <_{\pi}$.
Formally, consider  $u, v \in N$; if  $ u \hb{\pi} v$ then
$ u <_{\pi} v$ and $\dependent{a_u}{a_v}$.

\begin{definition}[Race]
Two actions $a_u$ and
$a_v$ from nodes $u$ and $v$ in a path $\pi \in \Pi$ are in a {\em race}, denoted by $ u \races v$,
 if the following conditions hold
true:(i) $ u \hb{\pi} v$ and $proc(a_u) \neq proc(a_v)$ and (ii) there does not
exist a node $w: u < w < v$ and  $ u \hb{\pi} w
\hb{\pi} v$.
\label{def:race}
\end{definition}
%

DPOR was first introduced with {\em persistent-sets}~\cite{Flanagan:2005}. 
However, recently in~\cite{Abdulla:2014}, an optimal
strategy to perform DPOR was presented.  Instead of using persistent-sets, the optimal
DPOR relies on a new construct, namely {\em source-sets}.
Succinctly, a source-set $S$ at a state $s$ is a set of threads that must be
explored from $s$ such that for each execution $E$ from $s$ there is some
thread $p \in S$ such that the first step in $E$ dependent with $p$ is by
$p$ {\em itself}.  Unlike persistent-sets where the first dependent step
with $p$ is taken by some thread in the set, in source-sets the first
dependent step with thread $p$ is taken by $p$ itself.  This subtle
difference can lead to smaller exploration choices from a state. 
Source-sets are persistent-sets but all persistent-sets are not source-sets.
DPOR based on source-sets has demonstrated considerable savings over basic
DPOR with persistent-sets~\cite{Abdulla:2014}.

We provide a brief demonstration illustrating the differences between
source-sets and persistent-sets using the example in
Figure~\ref{fig:odpor-ex} (borrowed from~\cite{Abdulla:2014}).  Consider the
path $r_1.r_2.q_1.q_2$ from the initial node and the persistent-set
$\{p,q\}$.  Note that $r_2$ is dependent with $p$ but thread $r$ is not in the
persistent-set.  By the preceding explanation 
of persistent-sets, it implies the
persistent-set at the initial node must also include $r$.
\begin{wrapfigure}{r}{0.4\textwidth}
\vspace{-1em}
\scriptsize 
\centering{\includegraphics[scale=1]{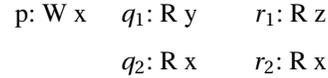}}
\caption{Example for Source-sets}
\label{fig:odpor-ex}
\vspace{-1em}
\end{wrapfigure}
Consider again the prefix $r_1.r_2.q_1.q_2$  from the initial node. Let the
source-set be $S:=\{p,q\}$.  The first step in the
prefix that is a dependent action with $p$  is $r_2$; however,
note that $r_2$ is mutually independent with the actions from the process
$q$.  Thus, by reordering, we obtain $r_1r_2q_1q_2 = q_1q_2r_1r_2$. 
According to the explanation of a source-set, it is now the case that the
first step in the execution prefix dependent with a source-set entry $q$ is
take by $q$ itself.  Thus, the given source-set $S$ is sufficient to explore
all executions starting from the start state.  By contrast, the
persistent-set definition mandated that $p, q,$ and $r$ are explored from the
start state.


For a path $\pi \in \Pi$ starting from node $n$, let $I_n(\pi)$ denote a
set of threads that have no {\em happens-before} predecessors in $\pi$. 
Intuitively, these are the ``first steps'' from threads $ p \in I_n(\pi)$ at nodes
$u \in \pi$.
That is,  there exists no $v \in \pi, v \neq u$  and
$v \hb{\pi} u$.  Let $WI_n(\pi)$ be the union of $I_{n}(\pi)$ and
the set of processes $p \in enabled(n)$ such for all actions $a$ in $\pi$,
we have $ \independent{next(n,p)}{a}$.  The set of threads $WI_n(\pi)$
represents the threads that can independently start an execution from the node $n$
covering all possible paths from $n$.

\begin{definition}[Source-sets]
A set $\prog{SSet}(n)$ is a source-set for the set of paths $\Pi$ after node $n$ if for 
each $ p \in \Pi$ we have $WI_n(p) \cap \prog{SSet}(n) \neq \emptyset $. 
\end{definition}
 
Our source-set based algorithm is similar to  Algorithm 1
in~\cite{Abdulla:2014}.  However, unlike the version in~\cite{Abdulla:2014},
our version of source-set DPOR operates in a symbolic execution engine. 
Procedure 
$\prog{ComputeBT}(u,v)$ in Algorithm~\ref{algo:abpress} calculates the source-set $\prog{SSet}$ at node $u$ when $a_u \races a_v$
incrementally. Procedure $\prog{NotDep(u,v)}$ is the sequence of nodes $\pi$ from the path $u \leadsto v$ (excluding $u$ 
and $v$) such that each node $w$ in the sequence is independent with $u$, {\em i.e.}, $u \nrightarrow_{\pi} w$.

\section {Summarization}
\label{sec:summaries}
Combining source-set DPOR and lazy abstraction in a naive manner can lead to unsoundness. 
Consider Figure~\ref{fig:cover-bt}. Impact with DPOR will 
explore $p_1$, compute the relevant backtrack choices for the steps within $p_2$, 
and finally stops exploring any further since $n_c \covers n_C$. 
However, a subset of paths following $n_C$ will also follow from the node~$n_c$.  
Terminating the dependency analysis  without considering
the dependencies among the shared variable accesses made in 
the sub-tree following $n_C$  will result in relevant backtrack
points in $p_2$ to be skipped. This is the source of unsoundness. 
In order to be sound, the DPOR algorithm must be invoked for each path suffix in the sub-tree that 
follows a covering node $n_C$ with each step in the prefix of the covered node $n_c$. 
Note that such a check quickly becomes expensive. 
We present an optimization of the above check by caching, for each shared variable, the set of threads that perform  
the ``earliest'' access to them. 
%

From before, an edge $ e = (u, T, a , w)$ shifts the control from node $u$ to node $w$ on action $a$.
Let signature of a node $Sig(e) = (t, R, W) $ be a tuple consisting of the owner thread,
the set of shared variables that is read by $a$ and the set of shared variables written by $a$. 
Let $\Pi$ be the set of paths starting from node $n$ to the final node,
 {i.e.}, for any path of the form $ n \leadsto w$  where $w$ is the final node with no actions enabled. 
%
\begin{definition}[Path Summary]
Let $\prog{sum}(p)$  be the signature of path $p = e.p'$ with $Sig(e) = (t, R, W)$ where the following
conditions hold:
\begin{itemize}
\item if $p'$ is empty then $\prog{sum}(p) = \{Sig(e)\}$
\item if exist $(t',R'W') \in \prog{sum}(p')$ such that $t = t'$, then 
$\prog{sum}(p) = \prog{sum}(p') \setminus \{(t',R',W')\}$ $\cup \{t, R\cup R', W \cup W'\}$
\item if exist $(t',Rd'Wr') \in \prog{sum}(p')$ such that $t \neq t'$ and $R \cap R' \neq \emptyset$ or
$W \cap W' \neq \emptyset$, then
 $\prog{sum}(p) = \prog{sum}(p')[(t',R',W') \mapsto (t',R'\setminus R,W'\setminus W)] \cup \{ Sig(e)\}$
\end{itemize}
\end{definition}




%
\begin{definition}[Node Summary]
The summary of a node $n \in \art$ is defined as the set 
\mbox{$\mathbb{S}(n) = \bigcup_{p \in \Pi}\prog{sum}(p)$} where $\Pi$ is the set of paths 
that start with root node $n$.
\end{definition}
%

\begin{theorem}[Soundness of Shared Access Summarization]
  Let  $\pi_1 =  u_1 \ldots u_n$ and
  \mbox{$ \pi_2 = v_1 \ldots  v_m$}  be two paths such that $ u_n \covers v_1$.
  For each node $u_i \in \pi_1$,  $\prog{SSet}(u_i)$ computed with 
  $\mathbb{S}(v_1)$  over-approximates  $\prog{SSet}(u_i)$ when 
  computed for the path $\pi_1.\pi_2$. 
  \label{theorem:summarization}
\end{theorem}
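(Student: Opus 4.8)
The plan is to fix an arbitrary node $u_i\in\pi_1$ and prove the inclusion $\prog{SSet}(u_i)$ computed with $\mathbb{S}(v_1)$ $\supseteq$ $\prog{SSet}(u_i)$ computed for $\pi_1.\pi_2$; that is, every thread the exact source-set computation places in the backtrack set of $u_i$ is also placed there by the summary-based computation. I would first partition the races contributing to $\prog{SSet}(u_i)$ along $\pi_1.\pi_2$ into (i)~races $u_i\races u_j$ with $u_j\in\pi_1$, $j>i$, and (ii)~cross races $u_i\races v_k$ with $v_k\in\pi_2$. For case~(i) the prefix $\pi_1$ appears verbatim in both computations, so they agree and there is nothing to show. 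The entire argument therefore reduces to proving that every cross race of type~(ii) is witnessed by a conflicting signature recorded in $\mathbb{S}(v_1)$, so that it triggers the same backtrack entry when $\prog{SSet}(u_i)$ is computed against the summary.

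The technical core is a lemma stating that $\prog{sum}$ faithfully records, per shared variable, the earliest accessing thread along a path: for any path $p$ from $v_1$ and any shared variable $x$, if some action on $p$ accesses $x$ in a conflicting way, then $\prog{sum}(p)$ contains an entry $(t,R,W)$ with $x\in R\cup W$ where $t$ is the thread performing the first such access on $p$. I would prove this by induction on the structure $p=e.p'$, inspecting the three cases of the Path Summary definition. Case~1 (empty $p'$) is immediate. Case~2 merges two accesses of the same thread by union; since accesses within a thread are totally ordered by $\hb{\pi}$, this never discards the earliest accessor. Case~3 is the crucial one: when the head edge $e$ with $\mathit{Sig}(e)=(t,R,W)$ conflicts with a later entry $(t',R',W')$ of a different thread on $x$, the variable $x$ is stripped from $t'$ and retained by $t$; since $e$ is the front of $p$, the access by $t$ is the earlier one, so the bookkeeping retains exactly the earliest accessor. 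This establishes that no conflicting access needed for race detection is lost by the summary's merging and removal operations.

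With the lemma in hand I would connect it to the Race definition. A cross race $u_i\races v_k$ requires $\dependent{a_{u_i}}{a_{v_k}}$ and that $v_k$ be the first node on $\pi_2$ dependent with $u_i$ admitting no intervening $\hb{\pi}$-chain; by independence this means $\mathit{proc}(a_{v_k})$ makes the earliest access conflicting with $a_{u_i}$ on some shared variable $x$ read or written by $a_{u_i}$ along $\pi_2$. By the lemma this access appears in $\prog{sum}(\pi_2)$, and since $\mathbb{S}(v_1)=\bigcup_{p}\prog{sum}(p)$ unions over all paths from $v_1$ we have $\prog{sum}(\pi_2)\subseteq\mathbb{S}(v_1)$; the hypothesis $u_n\covers v_1$ guarantees that $\pi_2$ is genuinely a path issuing from $v_1$ and is therefore among the summarized paths. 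Hence the summary-based computation sees a conflicting signature for $x$ and adds $\mathit{proc}(a_{v_k})$, together with the accompanying threads produced by $\prog{ComputeBT}$, to $\prog{SSet}(u_i)$. Because $\mathbb{S}(v_1)$ may in addition contain conflicting signatures drawn from paths other than $\pi_2$, the summary-based source-set can only grow, which yields over-approximation rather than equality.

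The main obstacle I anticipate is precisely the reconciliation carried out in the lemma: the source-set and race machinery reasons at the granularity of whole nodes and their $\hb{\pi}$ ordering, whereas $\prog{sum}$ reasons at the granularity of individual shared variables, repeatedly merging same-thread entries (case~2) and re-attributing variables between threads (case~3). I must verify that this per-variable bookkeeping never suppresses a genuine first-dependent node, i.e., that ``earliest access to $x$ recorded by $\prog{sum}$'' coincides with ``first $\hb{\pi}$-node dependent with $u_i$'' for the purpose of backtracking, and in particular that the removal in case~3 discards only accesses already shadowed by an earlier happens-before predecessor. Establishing this correspondence is where the real care lies; the over-approximation direction is comparatively forgiving, since any residual imprecision can only add spurious threads and never remove a necessary one, which is exactly what soundness demands.
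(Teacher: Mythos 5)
Your overall strategy coincides with the paper's: the paper's proof is a four-line contradiction argument that hinges on exactly the invariant you isolate as your key lemma --- that $\prog{sum}$ retains, for each shared variable, the earliest racing access --- and you are right both that this invariant is the technical core and that the paper leaves it essentially asserted rather than proved. Your decomposition into races internal to $\pi_1$ (where the two computations trivially agree) versus cross races into $\pi_2$, and your observation that over-approximation is the forgiving direction, are sound framing.

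However, the step you yourself flag as ``where the real care lies'' is a genuine gap, and the lemma as you state it does not close it. A race $u_i \races v_k$ is defined by the \emph{absence of an intervening happens-before chain}, not by $v_k$ being the first access to the contested variable. Concretely, suppose $a_{u_i}$ writes $x$ and two later nodes $v_j, v_k$ ($j<k$) in $\pi_2$, owned by distinct threads $t_1 \neq t_2$, both read $x$. Then $u_i \hb{\pi} v_j$ and $u_i \hb{\pi} v_k$, but $\independent{a_{v_j}}{a_{v_k}}$ (read--read), so there is no chain $u_i \hb{\pi} v_j \hb{\pi} v_k$, and hence $u_i \races v_k$ as well as $u_i \races v_j$: the exact computation on $\pi_1.\pi_2$ processes both races and may add $t_2$ to $\prog{SSet}(u_i)$. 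Yet case~3 of the Path Summary definition fires on the mere set overlap $R \cap R' \neq \emptyset$, which is not a dependence, and strips $x$ from $t_2$'s entry; the summary then witnesses only $t_1$'s access to $x$, and the summary-based computation never sees the race with $t_2$. So your lemma (``the first such access is retained'') is true but insufficient: what the theorem needs is that \emph{every} access that can be the target of a race from $\pi_1$ --- i.e., every access not shadowed by a happens-before predecessor within $\pi_2$ --- survives in $\mathbb{S}(v_1)$, and with the definitions as literally written that stronger statement fails. You would have to either reformulate the lemma and restrict case~3 to genuine conflicts, or argue that the thread contributed by the suppressed race is always already contributed by the retained one; your proposal does neither (nor, for what it is worth, does the paper's own proof, which simply asserts that the race partner ``must be the earliest action accessing the shared variables in a racing manner'').
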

\begin{proof}
Assume that there exists a thread $t \in \prog{SSet}(u_i)$ when computed on $\pi_1.\pi_2$
 which is not 
present in $\prog{SSet}'(u_i)$ when computed with  $\mathbb{S}(v_1)$.
 Let the assumed entry be $(T, a)$ from node $v$. 
Since $v$ must race with node $u_i$,
 clearly $a_v$ ( must be the ``earliest'' action accessing the shared variables in a racing manner
after $a_{u_i}$ (from  Definition~\ref{def:race}). 
From the invariant of the constructive definition of $\prog{sum}(\pi_2)$, $a_v$ is a part of $\prog{sum}(\pi_2)$ 
and therefore a part of $\mathbb{S}(v_1)$.
This contradicts our assumption.
\end{proof}

We overload the operator for racing nodes; if $ u \races v$, then $a_u \races a_v$ and $Sig(E(u)) \races Sig(E(v)) $.
Consider Figure~\ref{fig:summaries}. Let nodes $v, w, z$ fire actions that have the earliest
accesses to variables $x,y$ in path $p_2$ and $p_3$, as shown in the figure. The summary at $n_C$ is 
$\mathbb{S}(n_C)=\{(t_2, \{y\},\{x\}), (t_3, \{\}, \{x\})\}$. Observe that $u \races w $ and 
$ u \races z $; therefore, we perform the source-set analysis for the path $u \ldots w$ and 
$u \ldots z$ by computing $\prog{SSet}(u)$.
\begin{wrapfigure}[13]{r}{0.5\textwidth}
\vspace{-2em}
\centering{\includegraphics{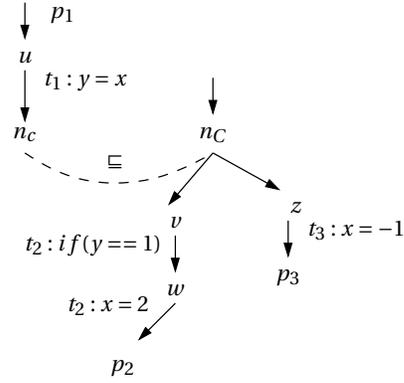}}
\caption{Source-sets with summaries}
\label{fig:summaries}
\vspace{-2em}
\end{wrapfigure}
Suppose we discover that $t_2 \in I_u(u \ldots w)$. We then add $t_2$ as an alternate schedule
choice to $\prog{SSet}(u)$. It is possible that $t_2$ at $u$ is disabled since there was 
no earlier node that updated the value of $y$ to one. This indicates that $\prog{SSet}(u)$
can potentially be overapproximate when computed with summaries.

\section{AbPress Algorithm}
\label{sec:algo}

\begin{algorithm*}[!t]
\vspace{-.25cm}
\begin{tabular}{p{0.45\textwidth}p{0.55\textwidth}}

\begin{algorithmic}[1]
\footnotesize
\Procedure {main}{$ $}
	\State $Q := \{ r\}$, $\covers := \emptyset$
	\While { $Q \neq \emptyset$}
		\State select and remove $v$ from $Q$
		\State $\prog{Close}(v)$
		\If {$v$ not covered}
			\If { $error(v)$ }
	 			\State $\prog{Refine}(v)$
			\EndIf
  		    \State $\prog{Expand}(v)$
                
		\EndIf
	\EndWhile
	\State
	\Return $\program$ is safe
\EndProcedure
  \item []
\Procedure {expand-thread}{$\thread, v$}
  
    \State $(\locvec,\annotation) := v$
    	\For { $(l,a_{T, \locvec(T)}, l') \in \actions(\thread)$ }
		\State $w:=$ fresh node 
    		\State $\locvec(w):=\locvec[\thread\mapsto l']$
      		\State $\annotation(w) := \mathit{True}$
      		\State $Q := Q \cup \{w\}$, $N := N \cup \{w\}$
      		\State $\transby {} := \transby {} \cup \{ (v,\thread,R,w) \}$   \label{algo:abpress:expand:trans}
   	\EndFor
\EndProcedure 
\item[]
\Procedure {Backtrack}{$v$}
\State $\pi :=r \ldots v $  path from $r$ to $v$
\For {$u, w \in \pi: u <_{\pi} w$}
\State compute  $\mathbb{S}(u)$
\If {$Sig(u) \races Sig(w)$}
\State $\prog{ComputeBT}(u,w)$
\EndIf
\If {exists $z: v \covers z$}
\For {$e \in \mathbb{S}(z) $ }
\If {$Sig(u) \races e$}
\State $\prog{ComputeBT}(u,g(e))$
\EndIf
\EndFor
\EndIf
\EndFor
\EndProcedure
\item[]
\Procedure {ComputeBT}{$u,v$}
\State $\pi' = r \ldots u$ path from $r$ to $u$
\State $\pi'' = \prog{NotDep}(u,v).v$
\If {$I_u(\pi'') \cap \prog{SSet}(u) = \emptyset$}
\State  $\exists t \in I_u(\pi''): \prog{SSet}(u) \cup= \{t\}$
\EndIf
\EndProcedure

\end{algorithmic}
&
\begin{algorithmic}[1]
\makeatletter\setcounter{ALG@line}{22}\makeatother
\footnotesize
\Procedure {expand}{$v$}
    \State $\thread := \prog{Choose}(v)$
    \If {$\thread=\bot$}
			\State $\prog{Backtrack}(v)	$
		\Else
			\State $\prog{Expand-thread}(\thread,v)$
		\EndIf
\EndProcedure 
\item []
\Procedure {close}{$v$}
  \For {$w \in \prog{Pre}(v) \wedge w$ uncovered $:v \sqsubseteq w$ }
  		\State $\covers := \covers \cup \{ (v,w) \}$
   		\State $\covers := \covers \setminus \{(x,y)\in \covers \mid v \leadsto y \}$
                
  \EndFor
  \If {v covered} 
  \State $\prog{Backtrack}(v)$
  \EndIf
  \State
\EndProcedure
\item[]
\Procedure {refine}{$v$} 
  \If {$\annotation(v) \equiv \mathit{False}$}
  	\State \Return
  \EndIf
  
	\State $\pi :=v_0,\ldots v_N $  path from $r$ to $v$
	\If {$A_0 \ldots A_N = \prog{ITP}(\formula(\pi))$}
			\For { $ i = 0 \ldots N$}
				\State $\annotation := A_i^{i}$
				\State $Q := Q \cup \{ w \mid w \covers v_i \}$
				\State $\covers := \covers \setminus \{ (w,v_i) \mid w \covers v_i \}$
				\State $\annotation(v_i) := \annotation(v_i) \wedge \annotation$
			\EndFor
      \For { $w \in V$ s.t. $w \leadsto v$}
      	\State $\prog{Close}(w)$
      \EndFor
	\Else
		\State \ abort (program unsafe)
	\EndIf

\EndProcedure 
\end{algorithmic}
\end{tabular}
\vspace{-.5cm}
\caption{AbPress}
\label{algo:abpress}
\end{algorithm*}

\prog{AbPress} is a combination of source-set DPOR with abstract summaries and Impact.
We give the pseudo-code in Algorithm~\ref{algo:abpress}.
A large part of  Algorithm~\ref{algo:abpress} is similar to Impara~\cite{wko2013}. 
Functions  $\prog{Backtrack(v)}$,  $\prog{Choose}(v)$  and  $\prog{ComputeBT}(u,v)$
are the contributions of this work. We now give an overview of the algorithm.
A work list $Q$ of nodes that are not fully explored is maintained along with 
the covering relation. Initially, $Q$ contains the root node $r$ and the 
cover relation is empty.
\prog{Expand} takes an uncovered leaf node and computes its successors.
     \prog{Choose} returns a thread that is chosen to be expored from a leaf node. 
     We do not provide the algorithm for \prog{Choose} but briefly summarize its functionality. If the 
     set of expanded threads and source-set from the node are empty, then any enabled
     thread is chosen, otherwise a thread from source-set is chosen. For every enabled action, it creates a fresh tree node $w$, and sets its location to the control
     successor $l'$ given by the action.  To ensure that the labeling is
     inductive, the formula $\annotation(w)$ is set to $\mathit{True}$.  Then
     the new node is added to the work list $Q$.  Finally, a tree edge is
     added (Line~\ref{algo:abpress:expand:trans}), which records the step from
     $v$ to $w$ and the transition formula $R$.  Note that if $w$ is an
     error location, the labeling is not safe; in which case, we need to
     refine the labeling, invoking operation \prog{Refine}.
          
\prog{Refine} takes an error node $v$ and, detects if the error path is feasible
     and, if not, restores a safe tree labeling.  First, it determines if
     the unique path $\pi$ from the initial node to $v$ is feasible by
     checking satisfiability of $\formula(\pi)$.  If $\formula(\pi)$ is
     satisfiable, the solution gives a counterexample in the form of a
     concrete error trace, showing that the program is unsafe. Otherwise, an
     interpolant is obtained, which is used to refine the labeling.  Note
     that strengthening the labeling may destroy the well-labeledness of the
     ART.  To recover it, pairs $w \covers v_i$ for strengthened nodes $v_i$
     are deleted from the relation, and the node $w$ is put into the work
     list again.

\prog{Close} takes a node $v$ and checks if $v$ can be added to the covering
     relation.  As potential candidates for pairs $v \covers w$, it only
     considers nodes created before $v$, denoted by the set $V^{\prec
     v}\subsetneq V$.  This is to ensure stable behavior, as covering in
     arbitrary order may uncover other nodes, which may not terminate.  Thus,
     only for uncovered nodes $w \in Pre(v)$, it is checked if
     $\locvec(w)=\locvec(v)$ and $\annotation(v)$ implies $\annotation(w)$. 
     If so, $(v,w)$ is added to the covering relation $\covers$.  To restore
     well-labeling, all pairs $(x,y)$ where $y$ is a descendant of $v$,
     denoted by $v\transbystar{} y$, are removed from $\covers$, as $v$ and
     all its descendants are covered. Finally, 
     if $v$ is covered by $z$, \prog{Backtrack} on $v$ is invoked. 
     The backtrack function performs the classic dependence analysis of DPOR.
     For each pair of nodes $u, w$ where $u,w$ in $r \ldots to v$ and $u$ races with $w$ we compute
     the source-sets by calling the function \prog{ComputeBT} (Lines 21-24). The functionality of
     \prog{ComputeBT} is responsible for computing source-sets and is similar to Algorithm 1 in \cite{Abdulla:2014}.
     Since $v$ is covered by  $z$, the \prog{Backtrack} function performs race analysis of each step $u$ in $r \ldots v$ 
     with each entry $e$ in the summary of $z$ (Lines 29-31). If $u$ and $e$ race 
     then the \prog{ComputeBT} function is invoked again (with a ghost node for $e$) 
     to compute the thread that should be added in the
     source-set.

\prog{Main} first initializes the queue with the initial node~$\epsilon$,
     and the relation $\covers$ with the empty set.  It then runs the main
     loop of the algorithm until $Q$ is empty, i.e., until the ART is
     complete, unless an error is found which exits the loop.  In the main
     loop, a node is selected from $Q$.  First, \prog{Close} is called to try
     and cover it.  If the node is not covered and it is an error node,
     \prog{Refine} is called.  Finally, the node is expanded, unless it was
     covered, and evicted from the work list.

\section{Experiments}
\label{sec:results}
\definecolor{taskyblue}{rgb}{0.44706, 0.56078, 0.81176}         
\definecolor{ta2skyblue}{rgb}{0.20392, 0.39608, 0.64314}        
\definecolor{ta3skyblue}{rgb}{0.12549, 0.29020, 0.52941}        

The purpose of our experiments is twofold: we would like to demonstrate the
effect of the techniques proposed in the paper, and evaluate the
competitiveness of our tool with comparable tools.  To this end, we compare
\prog{AbPress} (\prog{Impara} with Source-set DPOR and summaries) with
three different tools:
\begin{itemize}
  \item \prog{Threader}~\cite{DBLP:conf/cav/GuptaPR11}, a proof-generating
  software verifier for concurrent programs.  It is one of the few other
  tools that produce correctness proofs for concurrent programs.

  \item \prog{FMCAD'13}~\cite{wko2013}: \prog{Impara} with peephole
  partial-order reduction~\cite{Wang:2008:PPO:1792734.1792772}, which serves
  as a baseline to evaluate the benefit of partial-order reduction.

  \item \texttt{CBMC} (version 4.9)~\cite{cbmc}, to compare with bounded model
  checking.  Note that CBMC does not generate proofs for unbounded programs.
\end{itemize}
We evaluate on benchmarks of the Software Verification Competition~\cite{SVCOMP14}
(SV-COMP 2014) and on weak-memory Litmus tests (submitted to SV-COMP 2015):
\begin{itemize}
  \item \texttt{pthread}: This category contains basic concurrent data structures,
  and other lock-based algorithms.
  There are three challenging aspects to this category.
  (1)~The queue examples and the stack example contain arrays.
  (2)~The synthetic programs include the Fibonacci examples,
      which require a very high number of context switches
      to expose the bug.
  (3)~Some examples contain more than 10 threads.

  \item \texttt{pthread-atomic}: This category contains mutual-exclusion
  algorithms and basic lock functionality, which is implemented by
  busy-waits.  This creates challenging loop structures.  Some loops are
  unbounded, i.e., there exists no unwinding limit, and some loops are
  nested.
  
  \item \texttt{pthread-ext}: This category is primarily designed to test
  the capability of tools that can deal with a parametric number of threads,
  which we have indicated with $\infty$.  \prog{Impara} does not terminate
  without a thread bound in this case.  We ran \prog{Impara} with a thread
  bound of $5$, as this is the minimal number of threads it takes to expose
  all bugs.  This is the only category in which \prog{Impara} is incomplete,
  while tools that support parametric verification such as \prog{threader}
  have an advantage.
  
  \item \texttt{Litmus}: These are small programs that are used to detect
  weakenings of sequential consistency.  The benchmarks are \texttt{C}
  programs that have been instrumented to reflect weak-memory
  semantics~\cite{DBLP:conf/esop/AlglaveKNT13} by adding buffers.
  The high degree of nondeterminism makes them challenging to analyse.

\end{itemize}

\newcommand{\cp}{\checkmark}
\newcommand{\ca}{\checkmark}
\newcommand{\timeout}{\textbf{TO}}
\newcommand{\er}{\textbf{ERR}}

\begin{table*}
\centering
\resizebox{\textwidth}{!}{
\begin{tabular}{|l|c|c|*{2}r||*{2}r|*{3}r|*{3}r|}
\hline
\rowcolor{taskyblue!20}
\multicolumn{3}{|>{\columncolor{taskyblue!20}} c|}{}     
& \multicolumn{2}{>{\columncolor{taskyblue!20}} c|}{CBMC}
&\multicolumn{2}{>{\columncolor{taskyblue!20}} c|}{Threader} 
& \multicolumn{3}{>{\columncolor{taskyblue!20}} c|}{FMCAD'13} 
& \multicolumn{3}{>{\columncolor{taskyblue!20}} c|}{AbPress} \\ \hline
\rowcolor{taskyblue!20}
                         &  LOC/Threads   & safe &    & s       &       & s &    s & $|V|$ & SMT &  s & $|V|$& SMT \\
\hline
\multicolumn{13}{|c|}{SV-COMP -- pthread}\\
\hline
\texttt{queue\_ok\_true}    &  159/3    &  y & \cp          &    550.0   &   \er &    -- &    \timeout  & --  & --    & \bf 63.7 &  6489 & 14.7 \\\rowcolor{taskyblue!10}
\texttt{queue\_false}       &  169/3    &  n & \cp          &    9.3     &   \er &    -- &    \timeout  & -- &  -- & \bf 8.5 &  4867 & 1.5 \\\rowcolor{white}
\texttt{stack\_true}        &  120/3    &  y & \cp          &    230.0   &  \ca & 360.0   &   619   &  131507 & 336.5   & \bf 30.5  &  7875 & 17.2 \\ \rowcolor{taskyblue!10}  
\texttt{stack\_false}       &  120/3    &  n & \cp          &   \bf 0.5  &  \cp & 83.0  &    51    &18776       &  28.8  & 1.8 &  2366 &  0.1     \\ \rowcolor{white}  
\texttt{twostage\_3\_false} &  128/4    &  n & \cp          &    7.4     &   \cp &    760.0 &    15.8  &  4290 & 12.5   & \bf 1.2 &  3144 &  0.1 \\ \rowcolor{taskyblue!10}
\texttt{sync01\_true}       &  62/3     &  y & \cp          &    190.0   &   \cp &   \bf 0.2 & \bf 0.2  & 731  &  0.0  & 0.7 &  775 &  0.3  \\ \rowcolor{white}
\texttt{sigma\_false}       &  48/17    &  n & \ca          &    30      & \textbf{ERR} &    -- &    \timeout &   -- &  --  & \bf 3.5 &  692 &  0.1  \\ \rowcolor{taskyblue!10}
\texttt{indexer\_true}      &  83/14    &  y & \cp          &   \bf 1.4  &   \cp &    6.5 & \timeout &   -- &  --  & \textbf{ERR}  & --  & --   \\ \rowcolor{white} 
\texttt{reoder\_2\_false}   &  84/3     &  n & \cp          &    1.4     &   \cp &    2.4 & 60.1 & 11026   &  46.3  & \bf 0.9 & 1005  & 0.2   \\ \rowcolor{taskyblue!10} %
\texttt{reoder\_5\_false}   &  2866/6   &  n & \cp          &  \bf  1.4  &   \cp &    2.6 & \timeout &  -- &  --  & 63.7 & 14546  & 42.3   \\ \rowcolor{white} 
\texttt{lazy01\_false}      &  49/4     &  n & \cp          &    0.4     &   \cp &    4.5 & \timeout &   -- &  --  & \bf 0.1 & 147  & 0   \\ \rowcolor{taskyblue!10}  %
\texttt{bigshot\_p\_false}  &  34/3     &  n & \cp          &    0.3     &   \textbf{ERR} &    -- & 0.5 &  272  &  0.0  & \bf 0.1 & 147  & 0   \\  \rowcolor{white} %
\texttt{bigshot\_s\_false}  &  34/3     &  n & \cp          & \bf   0.4  &   \textbf{ERR} &    --& 0.5 &   324 &  0.5  & \textbf{WP} & 144  & 0.1   \\ \rowcolor{taskyblue!10} %
\texttt{bigshot\_s\_true}   &  34/3     &  y & \textbf{WA}  &    0.4     &   \textbf{ERR} &    -- & 0.7 &  324 &  0.5  & 0.1 & 147  & 0.1   \\ \rowcolor{white} %
\texttt{fib\_bench\_true}   &  43/3     &  y &  \cp & 17.0  &   \textbf{ERR} &    -- & \timeout &   -- &  --  & \textbf{TO} & -- &  --   \\ \rowcolor{taskyblue!10} %
\texttt{fib\_bench\_false}  &  40/3     & n &  \ca & \bf 1.0 &   \textbf{ERR} &    -- & \timeout &   -- &  --  & \textbf{TO} & --  & --   \\ \rowcolor{white} %
\hline
\multicolumn{13}{|c|}{SV-COMP -- pthread-atomic}\\
\hline
\rowcolor{taskyblue!10}
\texttt{scull\_true}          &  397/4    &  y & \cp  & \bf 5.4   &   \cp & 610   &    \timeout &  -- &  --  & 603.3  & 148,629  &  525  \\ \rowcolor{white}
\texttt{qrcu\_true}          &  147/3    &  y & \cp  &    850.0   &   \textbf{TO} &  -- &    \timeout &  -- &  --  & \bf 268.8 & 93742  &  219.7  \\ \rowcolor{taskyblue!10} %
\texttt{qrcu\_false}          &  147/4    &  n & \cp  &  \bf   0.5   &   \textbf{TO} &  --  &    0.9 &   1165 &  0.2  & 35.7 & 31453  &  19.1  \\ \rowcolor{white}
\texttt{dekker\_true}          &  54/3    &  y & \cp  &    120.0    &   \cp &  3.2   &    1.0 &   883 &  0.7  & \bf 0.1 &  331 &  0.0  \\ \rowcolor{taskyblue!10} %
\texttt{peterson\_true}        &  41/3    &  y & \cp  &   2.7    &   \cp &   5.3 &   0.6  &  746 & 0.4  & \bf 0.9 &  1832 &  0.5 \\ \rowcolor{white}
\texttt{lamport\_true}         &  75/3    &  y & \cp  &  850.0    &   \cp &  37 &    3.8 &  2560 &3.1 & \bf 1.1 & 3612 & 0.4 \\\rowcolor{taskyblue!10} 
\texttt{szymanski\_true}       &  54/3    &  y & \cp  &   7.4   &   \cp &    13.0 &\bf 1.2 &  1226 &   0.8 & 1.3 &  3098 &   0.7 \\\rowcolor{white}
\texttt{read\_write\_lock\_false}  &  51/5   &  n & \cp  & \bf 0.4    &   \cp &  22 &  5.3   & 4497 &  3.6 & 0.6 & 4899 &  53 \\ \rowcolor{taskyblue!10} %
\texttt{read\_write\_lock\_true}  &  51/5    &  y & \cp & \bf 0.8    &    \cp & 17.0  &    842 & 93073 & 770.3 & 66.9 & 66041 & 16.9 \\\rowcolor{white} 
\texttt{time\_var\_mutex}& 54/3    &  y & \cp  &   2.4     &   \cp &    2.6 &   \bf 0.5 &   1075 &   0.2 &  0.6 &  1196 &  0.2 \\ \rowcolor{white}
\hline
\multicolumn{13}{|c|}{SV-COMP -- pthread-ext}\\
\hline
\rowcolor{taskyblue!10}
\texttt{01\_inc\_true}          & 47/$\infty$       &  y & \cp  &  850.0  &   \cp & \bf 1.2  &   26.1  &   433 &    & 13.0  & 148,629  &  3.8  \\ \rowcolor{white}
\texttt{02\_inc\_true}          & 51/$\infty$      &  y & \cp  &  850.0  &   \cp & \bf 3.9    &    \timeout &   -- & --   & 44.3  & 93742  &  219.7  \\ \rowcolor{taskyblue!10} %
\texttt{03\_incdec\_true}       & 80/$\infty$      &  y & \cp  &  850.0  & \cp   & \bf 13.0    &  168.7   &  485808 &  47.3 & 123 & 31453  &  19.1  \\ \rowcolor{white}
\texttt{04\_incdec\_cas\_true}  & 99/$\infty$      &  y & \cp  &  850.0   &   \cp &  \bf 38.0   &  \timeout   &  -- & --  & 148.3 &  331 &  2  \\ \rowcolor{taskyblue!10} %
\texttt{05\_tas\_true}          & 57/$\infty$      &  y & \cp  &  550.0  &   \cp &   5.3 &  \timeout  & --  & --  & \bf 0.3 &  1832 &  0.5 \\ \rowcolor{white}
\texttt{06\_ticket\_true}       & 75/$\infty$      &  y & \cp  &  850.0  &   \cp &  \bf 0.8 & \timeout   & --  & --  & \textbf{TO} & -- &--  \\\rowcolor{taskyblue!10} 
\texttt{07\_rand\_true}         & 97/$\infty$      &   y & \cp  & 850.0  &   \cp &   4.7  & \timeout  & --  & --   & \bf 0.3 &  3098 &   0.7 \\\rowcolor{white}
\texttt{08\_rand\_case\_true}   & 123/$\infty$      &   y & \cp  & 850.0  &   \cp &  12.0   & \timeout & -- &  -- & \bf 0.2 &  3098 &   0.7 \\\rowcolor{taskyblue!10}
\texttt{09\_fmax\_sym\_true}    & 59/$\infty$      &   y & \cp  & 730.0  &   \cp &  \bf 13.0   & \timeout &  -- & --   & \bf \textbf{TO} &  -- &   -- \\\rowcolor{white}
\texttt{10\_fmax\_sym\_cas\_true}    & 69/$\infty$      &   y & \cp   & 420.0  &   \cp &  \bf 37.0  &  \timeout & --  & --   & \textbf{TO} &  -- &   -- \\\rowcolor{taskyblue!10}
\hline
\multicolumn{13}{|c|}{Weak-memory and Litmus tests}\\
\hline
\rowcolor{taskyblue!10}
\texttt{mix000\_tso\_false}           & 359/3 &   n & \cp  & \bf 1.1  &   \textbf{ERR} &     & 4.5 &  4126 &  2.5  & 2.9 & 3507 &   0.2 \\\rowcolor{white} 
\texttt{mix001\_tso\_false}           & 519/3 &   n & \cp  & \bf 2.75  &   \textbf{ERR} &     & 252.8 & 86812  & 209  & 23.0 & 15453 	 &   1.9 \\\rowcolor{taskyblue!10}
\texttt{podwr000\_power\_opt\_false}  & 242/3 &   n & \cp  & 5.6  &   \textbf{ERR} &     & 0.8 & 4740  & 0.2  & \bf 0.7 & 2423 	 &   0.1 \\\rowcolor{white}
\texttt{thin001\_tso\_true.c}         &    194/3   &   y & \cp  & \bf 1.1  &   \textbf{ERR} &     & 8.5 &  69961 &  2.9  & 13.3 & 30267 & 1.6 \\\rowcolor{taskyblue!10} 
\hline
\end{tabular}
}
\vspace{0.25cm}
\caption{Overview of benchmarks.
The best time for each benchmark is in {\bf bold font}.
Results of \prog{Impara} were obtained with SVN version 866.\label{table:svcomp}
\textbf{WA} means that the tool produced a wrong alarm for a safe example.
\textbf{WP} means that the tool produced a wrong proof for an unsafe example.
}
\end{table*}

We ran our experiments on a 64-bit machine with a 3\,GHz Xeon processor. 
Table~\ref{table:svcomp} gives an overview of the results.
For each benchmark, we give the number of lines (LOC)
and the number of threads.
For \prog{CBMC} and \prog{Threader}, we give the running time
and a tick mark if the benchmark was solved successfully.
For \prog{AbPress} and \prog{FMCAD13}, we provide
the running time, the number of nodes $|V|$ in the ART, and
the time spent for solving SMT queries.
The \textbf{effectiveness of summarisation} is tested by switching
summarisation off, and, instead, enumerating the set of paths represented by
the summaries.  Our experiments confirm that summarisation dramatically
reduces the cost of dependency analysis.

\begin{wrapfigure}[16]{r}{0.5\linewidth}
\centering
\vspace{-.5cm}

\begin{tikzpicture}[scale=0.7] 
\begin{axis}[%
  xmin=.1,xmax=1500, ymin=.1, ymax=1500,
  xmode=log,%
  ymode=log,%
  xlabel={\prog{FMCAD'13}},ylabel={\prog{AbPress}},
  scatter/classes={%
		t={mark=square*,blue},%
		f={mark=triangle*,red},%
		e={mark=o,draw=black}}]] 

\addplot [domain=.1:1500] {x};
\addplot [red,sharp plot, domain=.1:1500] {900}
          node [below] at (axis cs:10,850) {timeout};
\addplot [red,sharp plot, domain=.1:1500] coordinates{(900,.1) (900,1500)}
          node [left,rotate=90] at (axis cs:700,150) {timeout};
	
  	  \addplot[scatter,only marks,%
				  	   scatter src=explicit symbolic]%
  	  				 table[meta=label] 
  	  				 {scatter1.txt};
	\end{axis}
\end{tikzpicture} 

\caption{AbPress vs.~FMCAD'13\label{scatter1}}
\end{wrapfigure}
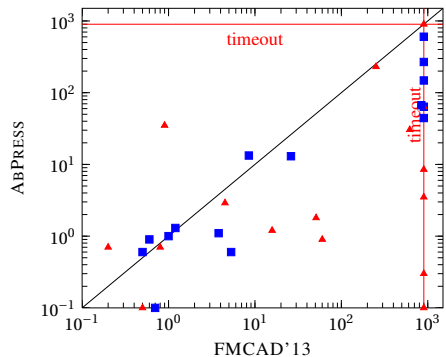
Without summarisation, we observe an order-of-magnitude increase of the
number of paths explored in dependency analysis compared to summarisation. 
For example, in \texttt{qrcu\_true} covering nodes have on average of around
14 postfixes.  This means on average 14 paths would have to be analysed
every time a cover between nodes is detected.  As a result, successful cover
checks become expensive.  However, the \prog{Impact} algorithm relies on
covers being both efficient to check and to undo.  In practice, this leads
to timeouts, primarily, in programs with loops.  For example, the analysis
of \texttt{qrcu\_true}, and \texttt{stack\_true} timed out after 900\,s. 
Overall, this naive algorithm is not competitive with \prog{FMCAD'13}.

We evaluate the \textbf{benefits of Source-sets versus peephole
partial-order reduction} by comparing against \prog{FMCAD'13}. 
Figure~\ref{scatter1} shows a scatterplot comparing the running times of
\prog{FMCAD'13} with \prog{AbPress}.  The latter is clearly superior,
resulting in both overall best running times and fewer timeouts.

As shown by Table~\ref{table:svcomp}, the number of ART nodes explored by
\prog{AbPress} is lower than for \prog{FMCAD'13}, except in unsafe
instances.  As peephole POR explores more interleavings, it may by chance
explore an interleaving with a bug earlier.

To evaluate the \textbf{competitiveness of AbPress}, as well as its
limitations, we have aimed to carry out a comprehensive evaluation, where we
deliberately retain examples that are not main strengths of \prog{AbPress},
e.g., where the number of threads is high or a very large number of thread
interleavings is required to expose bugs.

\prog{AbPress} \textbf{solves 10 SV-COMP benchmarks not solved by} \prog{Threader}.
Two of those \texttt{qrcu\_ok\_safe} and \texttt{qrcu\_false} are cases
where \prog{Threader} times out. The other cases are errors where our implementation
seems to be more robust in handling arrays and pointers\footnote{
\prog{AbPress} gives only one wrong result, as it currently does not take 
failure of memory allocation into account, which affects example \texttt{bigshot\_s\_false}.}.
Here the path-based nature of our algorithm can play out its strength
in determining aliasing information.
Furthermore, \prog{AbPress} is capable of dealing with
the weak-memory examples, where \prog{Threader} gives no results.

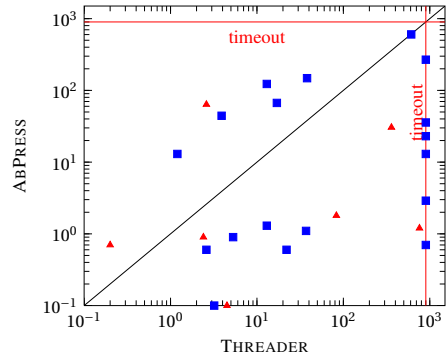
\begin{wrapfigure}[12]{r}{0.5\linewidth}
\centering
\vspace{-.5cm}
\begin{tikzpicture}[scale=0.7] 
\begin{axis}[%
  xmin=.1,xmax=1500, ymin=.1, ymax=1500,
  xmode=log,%
  ymode=log,%
  xlabel={\prog{Threader}},ylabel={\prog{AbPress}},
  scatter/classes={%
		t={mark=square*,blue},%
		f={mark=triangle*,red},%
		e={mark=o,draw=black}}]] 

\addplot [domain=.1:1500] {x};
\addplot [red,sharp plot, domain=.1:1500] {900}
          node [below] at (axis cs:10,850) {timeout};
\addplot [red,sharp plot, domain=.1:1500] coordinates{(900,.1) (900,1500)}
          node [left,rotate=90] at (axis cs:700,150) {timeout};


    \addplot[scatter,only marks,%
    scatter src=explicit symbolic]%
    table[meta=label] {scatter2.txt};
	\end{axis}
\end{tikzpicture} 
\caption{AbPress vs.~Threader\label{scatter2}}
\end{wrapfigure}
Disregarding the special \texttt{pthread-ext} category,
the only cases where \prog{AbPress} fails while \prog{Threader}
succeeds are the pathological Fibonacci examples and 
the indexer example, which features 14 threads.
Figure~\ref{scatter2} compares the running times of \prog{Threader}
with that of \prog{AbPress}.
The dots above the diagonal, where \prog{Threader} wins,
are mainly in the \texttt{pthread\_ext} category.

%
%

\section{Related Work}
Source-set based DPOR was recently presented in \cite{Abdulla:2014},
as part of state-less explicit state model-checker for Erlang programs.
While we borrow the notion of source-sets, our context is a
fundamentally different.
Hansen et al. consider a combination of partial-order reduction 
and zone abstractions for timed automata~\cite{DBLP:conf/cav/HansenLLN014}
where the dependence relation is computed from an abstract transformer.

Cimatti et al. \cite{DBLP:conf/tacas/CimattiNR11}
combine static POR with lazy abstraction to verify SystemC
programs. Our work differs from their work on multiple fronts:
SystemC has a significantly different concurrency
model than multi-threading, and we use an abstract dynamic POR, which is inherently more precise than static POR. 

We presented a combination of peephole partial-order with Impact
in \cite{wko2013}, however using peephole partial-order reduction
which is simpler to integrate than source sets
but leads to a greater number of interleavings, as demonstrated in our experiments.

\prog{Threader} is a software verifier
for multi-threaded programs~\cite{DBLP:conf/cav/GuptaPR11}
based on compositional reasoning and invariant inference by constraint solving.
In \cite{PRW2014}, Popeea et al present a combination
of abstraction for multi-threaded programs with Lipton's reduction.
Reduction is applied as a program transformation that inserts
atomic section based on a lockset analysis.
The authors then subsequently run \prog{Threader} on
the transformed program. Unfortunately, at its current stage,
their tool still requires
manual transformations, and therefore we did not test against this implementation.

\section{Conclusion}
We have presented a concurrent program 
model checking technique {\em AbPress} that incorporates an 
aggressive DPOR based on source-sets along  with 
Impact. Abstraction in the form of abstract summaries of shared accesses 
was utilized  
to amplify the effectiveness of 
DPOR with covers in the abstract reachability tree. We implemented
the {\em AbPress} algorithm in  Impara  and evaluated it against comparable
verifiers. Our initial results have been favorable. As a part of future work, we intend to 
use more aggressive property-guided abstractions to further reduce the interleaving space.

\bibliographystyle{splncs}
\bibliography{bib}

\end{document}